\newtheorem{thm}{Theorem}[section]
\newtheorem{prop}[thm]{Proposition}
\theoremstyle{definition}
\newtheorem{defn}[thm]{Definition}
\newtheorem{obs}[thm]{Observation}
\newtheorem{exmp}[thm]{Example}
\theoremstyle{remark}
\newcommand*{\reg}{\mathop{}\!\mathrm{reg}}
\newcommand*{\sing}{\mathop{}\!\mathrm{sing}}
\let\c@equation\c@thm
\numberwithin{equation}{section}
\title{The maximum likelihood degree of a chemical reaction at the equilibrium}
\author{Simone Camosso}
\date{}
\begin{document}
{\renewcommand{\thefootnote}{\fnsymbol{footnote}}
\setcounter{footnote}{1}
\footnotetext{\textbf{e-mail}: r.camosso@alice.it}
\setcounter{footnote}{0}
}
\begin{abstract}
The complexity of a maximum likelihood estimation is measured by its maximum likelihood degree ($ML$ degree). In this paper we study the maximum likelihood problem associated to chemical networks composed by one single chemical reaction under the equilibrium assumption. 
\end{abstract}

\maketitle

\tableofcontents

\section{Introduction}

The maximum likelihood estimation (MLE) is a method of estimating the parameters of a statistical model given observations. MLE problems appear frequently in experimental sciences. Examples of this diffusion are \cite{AMSW} and \cite{CPR}. In these works the authors consider substances in small concentrations and a discrete random model for chemical reactions (``chemical networks''). Using the maximum likelihood method and numerical tecniques they found an estimation of the rate constants associated to the chemical model. Inspired by these works we offer a new point of view on the topic. 

This article is based on certain analogies between the chemical language and the algebraic statistical formalism. The aim here is modest compared to the works mentioned earlier and,
 in what follows, we will reduce our analysis to consider toy models. To begin, let us list the main differences between our assumptions and those adopted by these authors. First of all our model is not discrete as in \cite{AMSW} and concentrations are not only ``small''. Second, we want to use algebraic statistic methods instead numerical. Third, we assume the initial concentrations know and the work is done in order to determine a theoretical index (the maximum likelihood degree, denoted by $ML$ degree) associated to the chemical kinetics. There is a last assumption that concerns the situation of ``equilibrium'' where the chemical reaction is in a privilegiated condition from the kinetics point of view. The key idea is to interpret the concentration of some chemical substance as a ``frequency'' and we will see how it is possible 
to associate to a chemical process a MLE problem. This formal trick permits to consider a large number of examples. Results are obtained using the methods from the algebraic statistics (as references the reader can consult \cite{HKS}, \cite{HS1}, \cite{HS2}, \cite{DSS}, \cite{HRS}, \cite{M} and \cite{CHKS}) with the auxiliary support of a math software as Maple. Conclusions and considerations are discussed in the last part of the paper. In the preliminaries section necessary math and chemical notations are introduced and explained.

\section{Preliminaries}

\subsection{The Maximum likelihood estimation problem (MLE)}

In algebraic statistic a statistical model is a subset of $\Delta_{n}\,=\,\{p=(p_{0},\ldots,p_{n})\in\mathbb{R}^{n+1}:p_{0},\ldots,p_{n}>0,p_{0}+\ldots +p_{n}\,=\,1\}$ called the 
probability simplex. The real numbers $p_{0},\ldots,p_{n}$ are frequencies and given a statistical model we shall consider the Zariski closure in $\mathbb{P}^{n}$ denoted by $V$ 
as the complex solutions of a system of homogeneous polynomial equations. The maximum likelihood problem consist to find 
$(p_{0},\ldots,p_{n})$ in the model $V_{>0}=V\cap \Delta_{n}$ which ``best explains'' the parameter 
$u=(u_{0},\ldots ,u_{n})\in\mathbb{N}^{n+1}$. This can be obtained maximazing the function:

\begin{equation}
\label{MLformula}
\mathcal{L}_{u}\,=\,\frac{p_{0}^{u_{0}}\cdots p_{n}^{u_{n}}}{(p_{0}+\ldots+p_{n})^{u_{0}+\cdots +u_{n}}},
\end{equation}
\noindent
with the constraint that $p\in V_{>0}$. Let $\lambda\,=\, \sum_{i=0}^{n}u_{i}$ be the dimension of the sample, the problem can be solved using the method of Lagrange multipliers. Furthermore we present another formulation of the same problem in terms more ``algebraic''. Let $\mathcal{H}=\left\{(p_{0},\ldots,p_{n})\in\mathbb{P}^{n}:p_{0}\cdots p_{n}(p_{0}+\ldots+p_{n})=0\right\}$ be the arrangiament of $n+2$ hyperplanes then we are interested for critical points of $\mathcal{L}_{u}$ in  $\mathbb{P}^{n}\setminus \mathcal{H}$. We also restrict our attention for regular points of the model $V_{\reg}=V\setminus V_{\sing}$. We have all elements to define the maximum likelihood degree ($ML$) associated to a statistical model.

\begin{defn}
The maximum likelihood degree $ML$ of $V$ is the number of complex critical points of $\mathcal{L}_{u}$ on $V_{\reg}\setminus \mathcal{H}$, for some $u$.
\end{defn}

In particular for the case of a curve in $\mathbb{P}^{2}$ the following theorem tell us how to calculate the $ML$ degree.

\begin{thm}
\label{huh1}
Let $V$ be a smooth curve of degree $d$ in $\mathbb{P}^2$ and $a=\#(X\cap \mathcal{H})$ the number of points in the distinguished arrangement, then the $ML$ degree of $X$ is $d^2-3d+a$.
\end{thm}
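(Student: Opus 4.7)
The plan is to invoke the general principle (due to Huh) that the maximum likelihood degree of a smooth very affine variety equals the signed Euler characteristic of its complement. Concretely, for a smooth subvariety $V\subset\mathbb{P}^{n}$, one has
\begin{equation*}
ML(V)\,=\,(-1)^{\dim V}\chi\bigl(V_{\reg}\setminus\mathcal{H}\bigr).
\end{equation*}
This is the key input; granting it, the proof reduces to a purely topological computation of the Euler characteristic of the open curve $V\setminus(V\cap\mathcal{H})$.

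First I would compute $\chi(V)$ for a smooth plane curve $V$ of degree $d$. By the genus--degree formula, $V$ has genus $g=\binom{d-1}{2}=\tfrac{(d-1)(d-2)}{2}$, so as a compact Riemann surface
\begin{equation*}
\chi(V)\,=\,2-2g\,=\,2-(d-1)(d-2)\,=\,-d^{2}+3d.
\end{equation*}
Since removing a finite point from a topological space decreases the Euler characteristic by $1$, removing the $a$ points of $V\cap\mathcal{H}$ gives
\begin{equation*}
\chi\bigl(V\setminus\mathcal{H}\bigr)\,=\,-d^{2}+3d-a.
\end{equation*}
As $\dim V=1$, applying $(-1)^{\dim V}=-1$ yields $ML(V)=d^{2}-3d+a$, as claimed.

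The main obstacle, and the non-routine step of the argument, is justifying the Euler characteristic formula for the ML degree. One route is to use Morse-theoretic ideas: the critical points of $\mathcal{L}_u$ on $V_{\reg}\setminus\mathcal{H}$ are isolated for generic $u$, and $\log\mathcal{L}_u$ can be viewed as a real-valued Morse function on a very affine variety whose signed count of critical points equals $\chi$. Alternatively, one can exploit the fact that $V\setminus\mathcal{H}$ is a complement of a hyperplane-like divisor in a smooth projective variety and express the count of critical points via a Chern class computation (Gauss--Bonnet), identifying the top Chern class of a logarithmic cotangent bundle with $\chi$. Either approach specializes cleanly to the plane curve case once $\chi$ is known. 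I would cite the appropriate result from the references (e.g.\ \cite{CHKS} or \cite{HS2}) and combine it with the genus calculation above to conclude.
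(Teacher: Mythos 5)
Your proposal is correct and follows essentially the same route as the paper, which states this theorem without a detailed argument and justifies it as a particular case of Huh's general result (Theorem \ref{huh2}), i.e.\ the signed Euler characteristic formula for smooth very affine varieties. Your explicit computation $\chi(V)=2-2g=-d^{2}+3d$ via the genus--degree formula, followed by subtracting the $a$ removed points and applying $(-1)^{\dim V}$, is exactly the specialization the paper has in mind and is carried out correctly.
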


\begin{obs}
We observe that if the curve $V$ is sufficiently generic then $a=4d$ and the $ML$ degree is $d\cdot (d+1)$ as predict by B\'{e}zout theorem for the equations:

$$ f(p_{0},p_{1},p_{2})=0, \ \ \left|\begin{array}{ccc} 1 & 1 & 1 \\ \frac{u_{0}}{p_{0}} & \frac{u_{1}}{p_{1}} & \frac{u_{2}}{p_{2}} \\ \frac{\partial f}{\partial p_{0}} & \frac{\partial f}{\partial p_{1}} & \frac{\partial f}{\partial p_{2}} \end{array} \right|=0.$$

Here $f$ is the homogeneous polynomial generating the curve $V$.
\end{obs}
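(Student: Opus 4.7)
The plan is to verify both halves of the Observation: that a generic smooth curve $V$ of degree $d$ in $\mathbb{P}^{2}$ satisfies $a=4d$, so by Theorem~\ref{huh1} its $ML$ degree is $d(d+1)$, and that this coincides with the B\'{e}zout count for the displayed system of equations.

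For the first assertion, I would decompose $\mathcal{H}$ into its four linear components, the coordinate lines $p_{i}=0$ and the sum line $p_{0}+p_{1}+p_{2}=0$. By B\'{e}zout each line meets $V$ in $d$ points counted with multiplicity, and genericity of $V$ guarantees both (i) transversality of $V$ to each line, so all $d$ points are simple, and (ii) that $V$ avoids the six singular points of the arrangement, namely $(1{:}0{:}0)$, $(0{:}1{:}0)$, $(0{:}0{:}1)$, $(1{:}{-1}{:}0)$, $(1{:}0{:}{-1})$ and $(0{:}1{:}{-1})$, so no intersection point is shared between two components. Summing over the four components gives $a=4d$, whence Theorem~\ref{huh1} yields $ML=d^{2}-3d+4d=d(d+1)$.

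For the B\'{e}zout verification, I would homogenize the determinant condition by multiplying its middle row by $p_{0}p_{1}p_{2}$: the row becomes $(u_{0}p_{1}p_{2},\,u_{1}p_{0}p_{2},\,u_{2}p_{0}p_{1})$ of degree $2$, while the bottom row has entries of degree $d-1$, so the cleared determinant is a homogeneous polynomial $D(p_{0},p_{1},p_{2})$ of degree $d+1$. B\'{e}zout applied to $f=0$ and $D=0$ in $\mathbb{P}^{2}$ therefore predicts $d(d+1)$ common zeros with multiplicity, matching the $ML$ count.

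The delicate step I anticipate is verifying that the $d(d+1)$ B\'{e}zout intersections actually lie in $V_{\reg}\setminus\mathcal{H}$, i.e.\ that clearing denominators introduces no spurious zeros on the arrangement that would inflate the count beyond the genuine critical points of $\mathcal{L}_{u}$. On the line $p_{0}=0$, for instance, $D$ restricts to $u_{0}p_{1}p_{2}\bigl(\partial_{1}f-\partial_{2}f\bigr)$, and I would argue that for generic $V$ this has no common zero with $f\vert_{p_{0}=0}$ away from the arrangement nodes already excluded above; the analogous computation on each of the remaining three lines then closes the argument and confirms that the B\'{e}zout prediction $d(d+1)$ recovers exactly the critical points computed by Theorem~\ref{huh1}.
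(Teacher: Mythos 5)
Your proposal is correct and follows exactly the route the paper's Observation sketches (and leaves unproved): generic transversality to the four lines of $\mathcal{H}$ gives $a=4d$, Theorem~\ref{huh1} then yields $d^{2}-3d+4d=d(d+1)$, and clearing denominators in the determinant gives a curve of degree $d+1$ whose B\'{e}zout intersection with $f=0$ matches this count. Your extra check that, for generic $V$, the cleared determinant acquires no spurious zeros on $\mathcal{H}$ (e.g.\ via its restriction $u_{0}p_{1}p_{2}(\partial_{1}f-\partial_{2}f)$ on $p_{0}=0$) is a detail the paper omits, and it completes the argument.
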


The previous theorem is a particular case of a general result for very affine varieties (see \cite{H}).

\begin{thm}[Huh]
\label{huh2}
If the very affine variety $X\setminus \mathcal{H}$ is smooth of dimension $d$, then the $ML$ degree is equal to $(-1)^{d}\chi\left(X\setminus \mathcal{H}\right)$.
\end{thm}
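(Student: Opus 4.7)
The plan is to translate the $ML$-degree count into a zero count for a logarithmic $1$-form and then identify that zero count with a signed topological Euler characteristic via a Gauss--Bonnet-type theorem for very affine varieties. First I would pass to the log scale: the critical points of $\mathcal{L}_u$ on $X\setminus\mathcal{H}$ are exactly the zeros of the restriction to $X$ of the logarithmic $1$-form
$$\omega_u \;=\; \sum_{i=0}^{n} u_i\,\frac{dp_i}{p_i}\;-\;\lambda\,\frac{d(p_0+\cdots+p_n)}{p_0+\cdots+p_n},$$
which is holomorphic on $\mathbb{P}^n\setminus\mathcal{H}$ with simple poles along $\mathcal{H}$. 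A Bertini-type transversality argument, using that $u$ is generic and $X\setminus\mathcal{H}$ is smooth, shows these zeros are simple and isolated, so their number is precisely the $ML$ degree by definition.

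Next I would use the very affine hypothesis to realize $X\setminus\mathcal{H}$ as a closed smooth subvariety of the algebraic torus $T=(\mathbb{C}^*)^{n+1}/\mathbb{C}^*$ through the rational coordinates $p_i/(p_0+\cdots+p_n)$. Under this embedding $\omega_u$ becomes the pullback of the translation-invariant form $d\log\chi_u$ on $T$, where $\chi_u$ is the Laurent monomial with exponent vector $u$. The problem is thereby reduced to the purely toric question of counting, on a smooth closed subvariety of $T$, the zeros of a generic character logarithmic $1$-form.

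The main obstacle is establishing the topological identity
$$\#\{\,z\in X\setminus\mathcal{H}:\omega_u|_X(z)=0\,\}\;=\;(-1)^{d}\,\chi\!\left(X\setminus\mathcal{H}\right)$$
for generic $u$. The strategy I would follow is to promote $\log|\chi_u|$ to a real Morse function on the underlying real manifold of $X\setminus\mathcal{H}$ whose critical points coincide with the zeros of $d\log\chi_u|_X$ and which, because $\chi_u$ is holomorphic, all have real Morse index equal to the complex dimension $d$; summing $(-1)^d$ over these critical points then recovers $\chi(X\setminus\mathcal{H})$ with the claimed sign. The genuine difficulty, and precisely where the ``very affine'' and smoothness hypotheses do the real work, is controlling the behavior of $\omega_u$ near $\mathcal{H}\cap X$ so that no zeros escape to the boundary as $u$ varies through a Zariski-open set, thereby making the count a bona fide topological invariant; this is the heart of Huh's argument in \cite{H}. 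Combining the three steps then delivers $ML=(-1)^{d}\chi(X\setminus\mathcal{H})$.
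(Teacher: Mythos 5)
The paper itself offers no proof of this statement: Theorem~\ref{huh2} is quoted as Huh's theorem and attributed to \cite{H}, so there is no internal argument to compare yours with, and your proposal has to stand on its own. Judged that way, your first two steps are the standard and correct reduction: critical points of $\mathcal{L}_u$ on $X\setminus\mathcal{H}$ are the zeros of $d\log\mathcal{L}_u$, and in the coordinates $p_i/(p_0+\cdots+p_n)$ the problem becomes counting zeros of $d\log\chi_u$ restricted to a smooth closed subvariety of a torus. The third step, however, is not a proof. The index computation is fine (the real part of a holomorphic Morse function has middle index $d$), but the identity $\chi=\sum_p(-1)^{\mathrm{ind}(p)}$ is only available when Morse theory applies globally, and $\log|\chi_u|$ is neither proper nor an exhaustion on the noncompact manifold $X\setminus\mathcal{H}$; nothing in your sketch rules out critical points or gradient trajectories escaping toward $\mathcal{H}\cap X$ or toward infinity, nor does it justify that for generic $u$ the critical points are finitely many and nondegenerate (your ``Bertini-type transversality argument'' is asserted, not given). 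Since you explicitly defer exactly this boundary control to ``the heart of Huh's argument,'' the proposal in effect assumes the substantive part of the theorem and proves only the easy translation around it.

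For completeness: Huh's actual argument in \cite{H} is not Morse-theoretic. In outline, he compactifies the very affine variety to a smooth variety with simple normal crossing boundary $D$, regards the forms $d\log\chi_u$ as global sections of the logarithmic cotangent bundle $\Omega^1(\log D)$, proves that for generic $u$ such a section has only simple zeros and that none of them lie on the boundary, and then identifies the number of zeros with the top logarithmic Chern class $\int c_d\bigl(\Omega^1(\log D)\bigr)$, which equals $(-1)^d\chi(X\setminus\mathcal{H})$ by the logarithmic Gauss--Bonnet theorem; the smooth case can alternatively be deduced from the Gabber--Loeser/Franecki--Kapranov theorem on generic characters of subvarieties of tori. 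Either route supplies precisely the genericity and no-zeros-at-the-boundary statements your sketch leaves open; if you insist on a Morse-theoretic proof, you would need an additional estimate (in the spirit of Orlik--Terao and Varchenko for arrangement complements) showing that for generic $u$ the gradient of $\mathrm{Re}\sum_i u_i\log f_i$ stays bounded away from zero near the ends of $X\setminus\mathcal{H}$, and that is where all the work lies.
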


\subsection{Chemical reactions}

A chemical reaction is represented by reactants placed to the left of an arrow and products placed to the right. 
We have that both reactants and products are denoted by capital letters $A,B,C, \ldots$. 
The arrow in a chemical reaction can be of three types $\leftarrow$, $\rightarrow$ and $\leftrightarrow$, denoting respectively the direction of evolution of the chemical process. 
The last case is usually used to denote a system that is in equilibrium. Denoting by $A_{1},A_{2}, \ldots, A_{n}$ some reactant and by $B_{1},B_{2},\ldots, B_{m}$ some product, 
we represent the chemical reaction with the following equation:

\begin{equation}
\label{cineticstheory}
\alpha_{1}A_{1}+ \alpha_{2}A_{2} + \cdots + \alpha_{n}A_{n} \rightarrow \beta_{1}B_{1}+\beta_{2}B_{2} + \cdots + \beta_{m}B_{m},
\end{equation}
\noindent
where $\alpha_{1}, \ldots,\alpha_{n},\beta_{1},\ldots,\beta_{m}$ are called stoichiometric coefficients. 
We define the order of a chemical reaction the sum of  $\alpha_{1}, \ldots,\alpha_{n}$. It is possible to define on the set of chemical substances an evaluation map $[\cdot]$ 
that gives the molar concentration of a particular substance represented by the dot $\cdot$. 
Another useful definition in chemical kinetics is the reaction velocity:

\begin{equation}
\label{cineticstheory2}
v\,=\,\frac{1}{\alpha_{i}}\frac{d [A_{i}]}{dt},
\end{equation}
\noindent
where $[A_{i}]$ is the molar concentration of product or reactant $A_{i}$ and $\alpha_{i}$ the stoichiometric coefficient in the reaction. For example the reaction:

$$ I_{2}+Br_{2} \rightarrow 2IBr, $$
\noindent
has a reaction velocity that is the same whether we look at $I_{2}$ ($\alpha=-1$), $Br_{2}$ ($\beta=-1$), or $IBr$ ($\gamma=+2$):

$$ -\frac{d [I_{2}]}{dt}\,=\, -\frac{d [Br_{2}]}{dt} \,=\, \frac{1}{2}\frac{d [IBr]}{dt}.$$

Often, the reaction velocity can be written in terms of a rate law, a power law in the reactant concentrations (or product concentrations), with a concentration--independent coefficient called the (direct) rate constant $K_{d}$:

\begin{equation}
\label{cinetictheory3}
v_{d}\,=\, K_{d} [A_{1}]^{\alpha_{1}}\cdots [A_{n}]^{\alpha_{n}},
\end{equation}
\noindent
or

\begin{equation}
\label{cinetictheory4}
v_{i}\,=\, K_{i} [B_{1}]^{\beta_{1}}\cdots [B_{m}]^{\beta_{m}},
\end{equation}
\noindent 
where the $v_{d}$ and $v_{i}$ stands for ``direct'' and ``inverse'' velocity. For details see \cite{C}.

\subsection{Chemical equilibrium}

There are situations in which both reactants and products are present but have no further tendency to undergo net change, these kind of reactions are called equilibrium reactions. We assume that we are in presence of an equilibrium represented by the following equation:

\begin{equation}
\label{cineticstheor5}
\alpha_{1}A_{1}+ \alpha_{2}A_{2} + \cdots + \alpha_{n}A_{n} \leftrightarrow \beta_{1}B_{1}+\beta_{2}B_{2} + \cdots + \beta_{m}B_{m}.
\end{equation}

Thus we can write the two velocity associated to the kinetic system:

\begin{equation}
\label{cinetictheory6}
v_{d}\,=\, K_{d} [A_{1}]^{\alpha_{1}}\cdots [A_{n}]^{\alpha_{n}},
\end{equation}
\noindent
and

\begin{equation}
\label{cinetictheory7}
v_{i}\,=\, K_{i} [B_{1}]^{\beta_{1}}\cdots [B_{m}]^{\beta_{m}}.
\end{equation}

By the equilibrium assumption we have the equality between $v_{i}$ and $v_{d}$ that can be written as:

$$ K_{i} [B_{1}]^{\beta_{1}}\cdots [B_{m}]^{\beta_{m}}\,=\,K_{d} [A_{1}]^{\alpha_{1}}\cdots [A_{n}]^{\alpha_{n}},$$
\noindent 
and isolating the constants terms we find that:

$$ \frac{K_{i}}{K_{d}}\,=\, \frac{[A_{1}]^{\alpha_{1}}\cdots [A_{n}]^{\alpha_{n}}}{[B_{1}]^{\beta_{1}}\cdots [B_{m}]^{\beta_{m}}}.$$

We denote the term $\frac{K_{i}}{K_{d}}$ by $K_{e}$ and call it the equilibrium constant associated to the reaction $(\ref{cineticstheor5})$. For a more detailed chemical-physical discussion on the subject the reader may refer to \cite{AP}.
         
\section{Results}

\begin{prop}
\label{first}
Let $[A]$ and $[B]$ be the concentrations of certain substances in the following chemical equilibrium reaction of first order:

\begin{equation}
\label{first1}
A \leftrightarrow B,
\end{equation}
\noindent
then the $ML$ degree is equal to $1$ for $K_{e}\,\not=\,-1$ and $0$ for $K_{e}\,=\,-1$.
\end{prop}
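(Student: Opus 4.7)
The plan is to translate the equilibrium relation into a projective linear equation, identify the resulting statistical model as a zero-dimensional subvariety of $\mathbb{P}^{1}$, and then read off the $ML$ degree from Huh's formula (Theorem \ref{huh2}).

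First, I would encode the two concentrations as frequencies by setting $[A]=p_{0}$ and $[B]=p_{1}$, so that the equilibrium condition $K_{e}=[A]/[B]$ becomes the homogeneous linear equation $p_{0}-K_{e}p_{1}=0$. In $\mathbb{P}^{1}$ this cuts out the single projective point $V=\{[K_{e}:1]\}$, a smooth variety of dimension $d=0$.

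Second, I would examine the distinguished arrangement $\mathcal{H}=\{p_{0}p_{1}(p_{0}+p_{1})=0\}$, which in $\mathbb{P}^{1}$ consists of the three points $[1:0]$, $[0:1]$, $[1:-1]$. For $K_{e}\neq 0$ (which is implicit in the presence of both species), the point $[K_{e}:1]$ lies on $\mathcal{H}$ if and only if $K_{e}=-1$.

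Finally, by Theorem \ref{huh2} applied with $d=0$, the $ML$ degree equals $\chi(V\setminus \mathcal{H})$. When $K_{e}\neq -1$ the set $V\setminus \mathcal{H}$ is a single point, giving $\chi=1$ and $ML=1$. When $K_{e}=-1$ we have $V\subset \mathcal{H}$, so $V\setminus \mathcal{H}$ is empty, $\chi=0$ and $ML=0$. As a direct sanity check one can substitute $p_{0}=K_{e}p_{1}$ into $\mathcal{L}_{u}$: the likelihood collapses to the constant $K_{e}^{u_{0}}/(K_{e}+1)^{u_{0}+u_{1}}$, whose unique point of definition on $V$ is trivially critical, while at $K_{e}=-1$ the denominator vanishes identically on $V$. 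The only real subtlety is recognizing that a first-order two-species reaction yields a zero-dimensional model, so Theorem \ref{huh1} (for curves in $\mathbb{P}^{2}$) does not apply and one must invoke the more general Theorem \ref{huh2} instead.
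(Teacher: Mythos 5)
Your proof is correct, but it takes a genuinely different route from the paper's. The paper argues through the parametrization $\varphi:\mathbb{C}^{*}\to(\mathbb{C}^{*})^{2}$, $p_{0}\mapsto(p_{0},K_{e}p_{0})$, imposes the normalization $p_{0}(1+K_{e})=1$, and solves the resulting Lagrange equation for the log--likelihood explicitly, finding the unique critical point $p_{0}=\frac{u_{0}+u_{1}}{\lambda(1+K_{e})}$ when $K_{e}\neq-1$, and no solution when $K_{e}=-1$ since the constraint degenerates to $0=1$. You instead work intrinsically: the model is the single point $[K_{e}:1]$ in $\mathbb{P}^{1}$, the arrangement $\mathcal{H}$ consists of $[1:0]$, $[0:1]$, $[1:-1]$, and the count follows from Theorem \ref{huh2} with $d=0$ --- or, even more simply, directly from the definition, since every point of a smooth zero-dimensional variety is a critical point of $\mathcal{L}_{u}$, so the $ML$ degree is just the cardinality of $V\setminus\mathcal{H}$, which is $1$ or $0$ according as $[K_{e}:1]$ avoids $\mathcal{H}$ or lies on it. Your route explains conceptually why $K_{e}=-1$ is the exceptional value (the model point falls on the hyperplane $p_{0}+p_{1}=0$), whereas the paper's computation exhibits the critical point explicitly; invoking Theorem \ref{huh2} for the empty set in the case $K_{e}=-1$ is unnecessary, as $ML=0$ is immediate from the definition. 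Your standing assumption $K_{e}\neq0$ is genuinely needed (at $K_{e}=0$ the model point $[0:1]$ lies on $\mathcal{H}$ and the literal statement would fail), but it is harmless since $K_{e}>0$ chemically, as the paper notes via the Arrhenius formula, and the paper's own parametrization into $(\mathbb{C}^{*})^{2}$ tacitly requires it as well; the difference in convention ($p_{0}=K_{e}p_{1}$ versus the paper's $p_{1}=K_{e}p_{0}$) is immaterial.
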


\begin{proof}
Let $x$ and $y$ be quantities associated respectively to $[A]$ and $[B]$. This is a line in $\mathbb{P}^{2}$. We must study the $ML$ degree of: 

$$ X=V(-y+K_{e}x).$$

Let $\varphi: \mathbb{C}^{*}\rightarrow \left(\mathbb{C}^{*}\right)^{2}$ be the map that $p_{0}\mapsto (p_{0},K_{e}p_{0})$  with the constraint $p_{0}(1+K_{e})\,=\,1$, then the likelihood--log function is $\mathcal{L}_{u_{0},u_{1}}\,=\, u_{0}\log{p_{0}}+u_{1}\log{K_{e}p_{0}}$. Studing the critical points of the likelihood--log under the constraint we find that:

$$ p_{0}\,=\, \frac{u_{0}+u_{1}}{\lambda(1+K_{e})}.$$

The conclusion follows.
\end{proof}

\begin{prop}
\label{second}
Let $[A],[B]$ and $[C]$ be the concentrations of certain substances in the following chemical equilibrium reaction of second order:

\begin{equation}
\label{first11}
A+B \leftrightarrow 2C,
\end{equation}
\noindent
then the $ML$ degree is $1$ for $K_{e}=4$, $0$ for $K_{e}=0$ and $2$ in the other cases. 
\end{prop}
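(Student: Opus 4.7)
The plan is to recognize the model as a plane conic and reduce the computation to Theorem \ref{huh1}. Writing $p_0, p_1, p_2$ for the quantities associated respectively to $[A], [B], [C]$, the equilibrium relation yields the conic $X = V(p_2^2 - K_e p_0 p_1) \subset \mathbb{P}^2$. I would first dispose of $K_e = 0$: there $X$ collapses to the double line $\{p_2 = 0\}$, which is contained in $\mathcal{H}$, so $X_{\reg} \setminus \mathcal{H}$ is empty and $ML = 0$.

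For $K_e \neq 0$ the symmetric matrix of the conic has determinant $-K_e^2/4 \neq 0$, so $X$ is smooth and Theorem \ref{huh1} gives $ML = a - 2$ with $a = \#(X \cap \mathcal{H})$. I would then intersect $X$ with each of the four lines of $\mathcal{H}$ in turn. Substituting $p_0 = 0$ or $p_1 = 0$ into $p_2^2 - K_e p_0 p_1$ forces $p_2 = 0$, yielding the coordinate vertices $[0:1:0]$ and $[1:0:0]$ (each as a tangency point); the line $\{p_2 = 0\}$ meets $X$ at precisely those same two vertices. The essential calculation is therefore the intersection with the simplex hyperplane: eliminating $p_1 = -p_0 - p_2$ produces the binary quadratic
\[
K_e p_0^2 + K_e p_0 p_2 + p_2^2 = 0,
\]
whose discriminant in $p_0$ is (up to a nonzero factor) $K_e(K_e - 4)$. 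A short check shows that any root must have $p_0, p_2 \neq 0$ (else $p_1 = 0$ too, inadmissible), so the new intersection points automatically avoid the coordinate vertices.

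The trichotomy then follows from the vanishing of this discriminant. For $K_e \neq 0, 4$ one obtains two distinct fresh points, giving $a = 4$ and $ML = 2$; for $K_e = 4$ the quadratic is the perfect square $(2p_0 + p_2)^2$ and contributes the single point $[1:1:-2]$, giving $a = 3$ and $ML = 1$. I expect the main subtle point to be the interpretation of $a$ in Theorem \ref{huh1} as the \emph{set-theoretic} cardinality of $X \cap \mathcal{H}$, so that the tangential contacts at the coordinate vertices and, when $K_e = 4$, at $[1:1:-2]$ are each counted once rather than with intersection multiplicity. This is ultimately forced by the Euler-characteristic formulation of Theorem \ref{huh2}, since $X \cap \mathcal{H}$ is a finite set and $\chi$ of a finite set equals its cardinality; once that is settled, the rest is arithmetic.
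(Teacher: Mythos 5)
Your proposal is correct, but it takes a genuinely different route from the paper. You treat all $K_{e}\neq 0$ uniformly through the plane--curve formula of Theorem \ref{huh1}: the conic $V(p_{2}^{2}-K_{e}p_{0}p_{1})$ is smooth, so $ML=a-2$ with $a=\#(X\cap\mathcal{H})$ counted set-theoretically (as you note, this is forced by Theorem \ref{huh2}, and no transversality is needed since the very affine curve is smooth), and the whole trichotomy is read off from the discriminant $K_{e}(K_{e}-4)$ of the intersection with the line $p_{0}+p_{1}+p_{2}=0$; your check that these intersection points avoid the coordinate vertices, and that the vertices are the only points coming from the three coordinate lines, makes the count $a=4$ (generic) versus $a=3$ ($K_{e}=4$) complete. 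The paper instead argues case by case: for $K_{e}=4$ it simply identifies the model with the Hardy--Weinberg curve and cites the known $ML=1$; for $K_{e}\neq 0,4$ it writes down the critical (Lagrange/determinantal) equations explicitly and exhibits the two critical points $(\varepsilon_{i}:1:K_{e}\varepsilon_{i})$ with $\varepsilon_{i}$ roots of an explicit quadratic; for $K_{e}=0$ it lists points of $X\cap\mathcal{H}$, whereas your observation that the (double) line $p_{2}=0$ is entirely contained in $\mathcal{H}$, so the relevant locus is empty, is the cleaner justification of $ML=0$. The trade-off: the paper's computation produces the critical points themselves (useful if one actually wants the estimator), while your topological argument is shorter, handles $K_{e}=4$ and the generic case in one stroke, and explains conceptually why $K_{e}=4$ is special --- it is exactly the value at which the conic becomes tangent to the simplex line, dropping $a$ from $4$ to $3$.
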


\begin{proof}
For the equation $(\ref{first11})$ the equilibrium constant is given by:

$$ K_{e}\,=\, \frac{[C]^2}{[A][B]}.$$

Under the assumption that $[A]+[B]+[C]=c$ with $c>0$, calling $x=\frac{[A]}{c},y=\frac{[B]}{c}$ and $z=\frac{[C]}{c}$ we have:

$$ x+y+z=1.$$

The variety of interest is: 

$$ X=V(K_{e}xy-z^2),$$
\noindent 
and the case of $K_{e}=4$ is the Hardy--Weinberg law (details are in \cite{E}, \cite{Ca} and \cite{Ha}). The case $K_{e}=0$ gives as points in the intersection:

$$ X\cap\mathcal{H}\,=\,\{(1:0:0),(0:1:0)\},$$
\noindent 
so the $ML$ degree is $0$. What remain to examine is the case of $K_{e}\not=0,4$. In this case we have the following equations system:

$$\begin{cases} K_{e}xy=z^2 \\ -\frac{2zu_{1}}{y}+\frac{K_{e}yu_{2}}{z}+\frac{u_{0}}{K_{e}}-\frac{u_{1}}{K_{e}}+\frac{2zu_{0}}{x}-\frac{K_{e}xu_{2}}{z}=0 \end{cases}.$$

This leads to two solutions of the following form:

$$\left(\varepsilon_{i}:1:K_{e}\varepsilon_{i}\right),$$
\noindent
where $\varepsilon_{i}$ for $i=1,2$ is a solution of the polynomial equation $2K_{e}^{2}z^2u_{1}-u_{2}K_{e}+(-2K_{e}^{2}u_{0}-u_{0}+u_{1}+u_{2}K_{e})z=0$. This proves that the $ML$ degree is $2$. 
\end{proof}

\begin{prop}
Let $[A]$ and $[B]$ be the concentrations of certain substances in the following chemical equilibrium reaction:

\begin{equation}
\label{first1}
n A \leftrightarrow n B,
\end{equation}
\noindent
for $n\,=\,2,3$ then the $ML$ degree is $1$.
\end{prop}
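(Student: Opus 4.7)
The plan is to reduce the statement to Proposition \ref{first}. Writing $x=[A]/c$ and $y=[B]/c$ with $c=[A]+[B]$, so that $x+y=1$, the equilibrium condition reads $x^{n}-K_{e}y^{n}=0$; for positive concentrations this is equivalent to $x=K_{e}^{1/n}\,y$, so on the positive simplex the model coincides with the one associated to the first--order reaction $A\leftrightarrow B$ with equilibrium constant $K_{e}^{1/n}$.

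To make this reduction rigorous I would factor the defining polynomial over $\mathbb{R}$. For $n=2$,
$$x^{2}-K_{e}y^{2}=(x-\sqrt{K_{e}}\,y)(x+\sqrt{K_{e}}\,y),$$
and for $n=3$,
$$x^{3}-K_{e}y^{3}=(x-K_{e}^{1/3}y)(x^{2}+K_{e}^{1/3}xy+K_{e}^{2/3}y^{2}).$$
In both cases only the linear factor $x-K_{e}^{1/n}y$ meets the positive real part of $\mathbb{P}^{1}$: the second linear factor for $n=2$ has negative slope, while the quadratic factor for $n=3$ has discriminant $-3K_{e}^{2/3}<0$ and carries no real point. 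The statistical model is therefore the line $x=K_{e}^{1/n}y$, and Proposition \ref{first} applied with $K_{e}$ replaced by $K_{e}^{1/n}$ produces the unique critical point $p_{0}=(u_{0}+u_{1})/(\lambda(1+K_{e}^{1/n}))$, yielding $ML=1$.

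The main delicate point is the identification of the variety of interest with this single irreducible component. The remaining factors of $x^{n}-K_{e}y^{n}$ are Galois--conjugate branches of the complex variety $V(x^{n}-K_{e}y^{n})$ that contain no positive real point, hence no MLE of the chemical system; discarding them is what allows us to land in the setting of Proposition \ref{first} rather than collecting an $ML$ contribution from each branch. Once this is settled, the argument is essentially a relabelling of the first--order calculation, which is why the result does not extend cleanly past $n=3$ without further discussion of the real factorization of $x^{n}-K_{e}y^{n}$.
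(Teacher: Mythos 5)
Your proposal is correct and is essentially the paper's own argument: the paper likewise works only on the single branch $y=\sqrt[n]{K_{e}}\,x$, parametrizing it by $p_{0}\mapsto(p_{0},\sqrt[n]{K_{e}}\,p_{0})$ with the constraint $p_{0}(1+\sqrt[n]{K_{e}})=1$ and repeating the Lagrange computation of Proposition~\ref{first}, arriving at the same critical point $p_{0}=\frac{u_{0}+u_{1}}{\lambda(1+\sqrt[n]{K_{e}})}$. Your explicit real factorization of $x^{n}-K_{e}y^{n}$, together with the observation that the remaining factors carry no positive real points, merely makes explicit the discarding of the conjugate branches that the paper performs implicitly through its choice of parametrization.
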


\begin{proof}
Let $x$ and $y$ be quantities associated respectively to $[A]$ and $[B]$. Let $K_{e}x^n-y^n=0$ be the equation defining $X$. In order to determine the ML degree we consider the map $\varphi: \mathbb{C}^{*}\rightarrow \left(\mathbb{C}^{*}\right)^{2}$ that $p_{0}\mapsto (p_{0},\sqrt[n]{K_{e}}p_{0})$  with the constraint $p_{0}(1+\sqrt[n]{K_{e}})\,=\,1$, then the likelihood--log function is $\mathcal{L}_{u_{0},u_{1}}\,=\, u_{0}\log{p_{0}}+u_{1}\log{\sqrt[n]{K_{e}}p_{0}}$. Studing the critical points of the likelihood--log under the constraint we find that:

$$ p_{0}\,=\, \frac{u_{0}+u_{1}}{\lambda(1+\sqrt[n]{K_{e}})}.$$
\end{proof}

\begin{prop}
Let $[A],[B],[C]$ and $[D]$ be the concentrations of certain substances in the following chemical equilibrium reaction:

\begin{equation}
A+B \leftrightarrow C+D,
\end{equation}
\noindent
then the $ML$ degree is $1$.
\end{prop}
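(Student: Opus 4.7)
The plan is to extend the parametrization strategy of Proposition~\ref{first} to the two-dimensional variety
\[
X = V(K_{e}\,xy - zw) \subset \mathbb{P}^{3}.
\]
For $K_{e}\neq 0$ this $X$ is a smooth quadric surface, isomorphic to $\mathbb{P}^{1}\times\mathbb{P}^{1}$ via the Segre-type map $\varphi\colon \mathbb{P}^{1}\times\mathbb{P}^{1}\to\mathbb{P}^{3}$ defined by
\[
\varphi(p_{0},p_{1},q_{0},q_{1}) = \bigl(p_{0}q_{0},\, p_{1}q_{1},\, \sqrt{K_{e}}\,p_{0}q_{1},\, \sqrt{K_{e}}\,p_{1}q_{0}\bigr);
\]
a direct check gives $K_{e}(p_{0}q_{0})(p_{1}q_{1}) = (\sqrt{K_{e}}\,p_{0}q_{1})(\sqrt{K_{e}}\,p_{1}q_{0})$, so the image lies in $X$.

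Pulling back the log-likelihood along $\varphi$ and absorbing the additive constant $\tfrac{u_{2}+u_{3}}{2}\log K_{e}$, I obtain
\[
\varphi^{*}\log\mathcal{L}_{u} = (u_{0}+u_{2})\log p_{0} + (u_{1}+u_{3})\log p_{1} + (u_{0}+u_{3})\log q_{0} + (u_{1}+u_{2})\log q_{1} - \lambda\log S,
\]
where $S = p_{0}q_{0}+p_{1}q_{1}+\sqrt{K_{e}}(p_{0}q_{1}+p_{1}q_{0})$ and $\lambda = u_{0}+u_{1}+u_{2}+u_{3}$. Working in the affine chart $p_{1}=q_{1}=1$, so that $S = p_{0}q_{0}+1+\sqrt{K_{e}}(p_{0}+q_{0})$, the critical-point conditions $\partial_{p_{0}}\log\varphi^{*}\mathcal{L}_{u} = \partial_{q_{0}}\log\varphi^{*}\mathcal{L}_{u} = 0$ become a pair of rational equations in two unknowns.

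The key algebraic simplification is the translation $s = p_{0}+\sqrt{K_{e}}$, $t = q_{0}+\sqrt{K_{e}}$, under which $S$ collapses to $S = st + 1 - K_{e}$. In these coordinates the two derivative equations read $(u_{0}+u_{2})\,S = \lambda(s-\sqrt{K_{e}})\,t$ and $(u_{0}+u_{3})\,S = \lambda(t-\sqrt{K_{e}})\,s$; dividing them produces a single bilinear relation between $s$ and $t$, which together with $S = st + 1 - K_{e}$ reduces the problem to a single equation whose solution gives the critical point. The main obstacle is verifying that only one solution survives on $X_{\reg}\setminus\mathcal{H}$: a priori the quadratic flavor of the system could produce a second candidate, and one must check that this spurious root either coincides with the first after passing back to $(p_{0},p_{1},q_{0},q_{1})$, or else falls on the arrangement $\mathcal{H}$ and is discarded. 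Once uniqueness is established, the conclusion $ML = 1$ follows.
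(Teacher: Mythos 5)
Your parametrization and critical equations are set up correctly, but the proof stops exactly at the step that carries all the content, and that step cannot be completed: for $K_{e}\neq 1$ the second root of your system is not spurious. Concretely, take $K_{e}=4$ (so $\sqrt{K_{e}}=2$) and $u=(1,1,1,1)$, so $\lambda=4$ and $u_{0}+u_{2}=u_{0}+u_{3}=2$. Your equations $(u_{0}+u_{2})S=\lambda(s-\sqrt{K_{e}})t$ and $(u_{0}+u_{3})S=\lambda(t-\sqrt{K_{e}})s$ with $S=st+1-K_{e}=st-3$ force $s=t$ and then $s^{2}-4s+3=0$, giving $(s,t)=(1,1)$ and $(s,t)=(3,3)$, i.e. $(p_{0},q_{0})=(-1,-1)$ and $(1,1)$; both satisfy the equations (at the first, $S=-2$ and $2S=-4=4(1-2)\cdot 1$; at the second, $S=6$ and $2S=12=4(3-2)\cdot 3$). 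They map to the projective points $(1:1:-2:-2)$ and $(1:1:2:2)$, which both lie on the smooth quadric $V(4xy-zw)$, have all coordinates nonzero and coordinate sums $-2$ and $6$ respectively, so both are critical points of $\mathcal{L}_{u}$ on $X_{\reg}\setminus\mathcal{H}$. Hence for $K_{e}=4$ the likelihood has two critical points and the ML degree is $2$, not $1$; in general, dividing your two equations and substituting back gives a genuine quadratic whose two roots both survive off $\mathcal{H}$ whenever $K_{e}\neq 1$.

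The uniqueness you hope for holds precisely when $K_{e}=1$: then $1-K_{e}=0$, $S=st$ factors, the equations decouple into the linear conditions $(u_{0}+u_{2})s=\lambda(s-1)$ and $(u_{0}+u_{3})t=\lambda(t-1)$, and there is a unique critical point, recovering the classical independence-model result. For $K_{e}\neq 1$ the section $\{x+y+z+w=0\}\cap X$ is an irreducible $(1,1)$-curve rather than a union of a line from each ruling (a $(1,1)$-form $ac+bd+\sqrt{K_{e}}(ad+bc)$ factors only if $1-K_{e}=0$), and this is also where the paper's own argument needs $K_{e}=1$: the claimed splitting of $X\setminus\mathcal{H}$ as a product of two thrice-punctured lines fails otherwise, and for $K_{e}\neq 0,1$ one computes $\chi(X\setminus\mathcal{H})=2$, so Theorem \ref{huh2} gives ML degree $2$, consistent with the explicit critical points above. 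So the verification you defer is not a technicality: the statement as written is correct only for $K_{e}=1$, and your method, completed, proves exactly that case.
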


\begin{proof}
By the total conservation of the quantities $[A]+[B]+[C]+[D]=c$ we set $x=\frac{[A]}{c},y=\frac{[B]}{c},z=\frac{[C]}{c}$ and $t=\frac{[D]}{c}$. Our models is the well know independence model of \cite{DSS} \S 1.1 given by $X=V(K_{e}xy-zt) \subset \mathbb{P}^3$. The variety $X$ is isomorphic to $\mathbb{P}^{1}\times \mathbb{P}^{1}$ with coordinates $((K_{e}x:y),(z:t))$. We have that: 

$$ X\setminus\mathcal{H}\,=\, \mathbb{P}^{1}\times \mathbb{P}^{1}\setminus \{K_{e}xyzt(K_{e}x+y)(z+t)=0\}$$
$$\,=\,(\mathbb{P}^{1}\setminus\{K_{e}xy(K_{e}x+y)=0\})\times(\mathbb{P}^{1}\setminus\{zt(z+t)=0\})$$
$$\,=\,(\mathbb{P}^{1}\setminus\{3 \text{hyperplanes}\})\times(\mathbb{P}^{1}\setminus\{3 \text{hyperplanes}\})$$
\noindent
and by theorem $\ref{huh2}$ we find that the $ML$ degree is $\chi(X\setminus\mathcal{H})\,=\, (-1)\cdot (-1)\,=\,1$.
\end{proof}

\begin{obs}
We observe that the equilibrium constant $K_{e}$ is given by the Arrenius formula:

$$ K_{e}\,=\, e^{\frac{\Delta G}{RT}},$$
\noindent 
where $T$ is the temperature, $R$ the gas constant and $\Delta G$ the Gibbs free energy. For this reason it makes sense only consider the case of strictly positive $K_{e}$. 
\end{obs}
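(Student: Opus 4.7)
The plan is to establish the two assertions of the observation in order: first the exponential representation $K_{e} = e^{\Delta G/(RT)}$ (which the author attributes to Arrhenius), and second the resulting positivity $K_{e} > 0$.

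For the first assertion, I would appeal to standard equilibrium thermodynamics rather than rederive it. The reaction Gibbs free energy satisfies $\Delta_{r} G = \Delta G + RT \log Q$, where $Q$ is the reaction quotient; imposing the equilibrium condition $\Delta_{r} G = 0$ together with $Q = K_{e}$ gives $\Delta G = -RT \log K_{e}$, and inverting yields $K_{e} = \exp\bigl(-\Delta G / (RT)\bigr)$. This agrees with the displayed formula up to the sign convention adopted for $\Delta G$. Since the reference \cite{AP} is already cited in Section 2 for exactly this kind of chemical-physical background, my preference would be to invoke the van't Hoff relation as a black box rather than reproduce the derivation.

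For the second assertion, positivity is then essentially formal. The absolute temperature $T$ and gas constant $R$ are strictly positive, and the Gibbs free energy $\Delta G$ is a finite real number, so the exponent $\Delta G / (RT)$ lies in $\mathbb{R}$; since the real exponential map sends $\mathbb{R}$ into $(0, +\infty)$, we conclude $K_{e} > 0$. Consequently the degenerate values $K_{e} = -1$ appearing in Proposition \ref{first} and $K_{e} = 0$ appearing in Proposition \ref{second} cannot be realised by an actual chemical equilibrium at finite positive temperature, and for chemically meaningful applications the $ML$ degree computations of the preceding propositions need only be read on $K_{e} \in (0, +\infty)$. The only real obstacle is expository, namely choosing how much thermodynamic background to quote; the mathematical content reduces to the triviality $\exp(\mathbb{R}) \subseteq (0, +\infty)$.
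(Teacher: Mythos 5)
Your proposal is correct and matches what the paper intends: the paper states this observation with no argument at all, implicitly relying on the standard equilibrium thermodynamics of \cite{AP}, which is exactly what you invoke, and the positivity claim reduces to the triviality $\exp(\mathbb{R})\subseteq(0,+\infty)$ as you say. You are also right to flag the two blemishes in the statement itself --- the standard relation is $K_{e}=e^{-\Delta G/(RT)}$ (the van't Hoff isotherm, not the Arrhenius equation, which governs rate constants rather than equilibrium constants) --- but since neither the sign nor the attribution affects positivity, your argument fully establishes the observation's conclusion.
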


\begin{prop}
Let $[A],[B]$ and $[C]$ be the concentrations of certain substances in the following chemical equilibrium reaction:

\begin{equation}
A+B \leftrightarrow 3C,
\end{equation}
\noindent
then the $ML$ degree is $9$.
\end{prop}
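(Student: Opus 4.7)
My approach parallels Proposition \ref{second} and the preceding $A+B\leftrightarrow C+D$ case, adapted to the cubic relation. Introducing normalized concentrations $x=[A]/c$, $y=[B]/c$, $z=[C]/c$ with $x+y+z=1$, the equilibrium reads $K_e xy = z^3$; this is inhomogeneous in $(x:y:z) \in \mathbb{P}^2$, so the natural projective model is the cubic surface $X = V(K_e xyw - z^3) \subset \mathbb{P}^3$ obtained by adjoining a homogenizing coordinate $w$.

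The plan is to apply Theorem \ref{huh2} to a monomial parameterization
\[ \varphi \colon \mathbb{P}^2 \to X, \qquad (a:b:c) \longmapsto \bigl(a^3/K_e : b^3 : abc : c^3\bigr), \]
which satisfies the defining relation identically. Under $\varphi$ the pullback of the distinguished arrangement $\{p_0 p_1 p_2 p_3 (p_0+p_1+p_2+p_3) = 0\}$ reduces, as a reduced subscheme of the parameter plane, to the three coordinate lines $\{a=0\}$, $\{b=0\}$, $\{c=0\}$ together with the cubic $F(a,b,c) = a^3/K_e + b^3 + abc + c^3$. Hence by Theorem \ref{huh2} the $ML$ degree equals $(-1)^2\, \chi\bigl(\mathbb{P}^2 \setminus \{abc \cdot F = 0\}\bigr)$.

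I would compute this Euler characteristic by additivity: $\chi\bigl((\mathbb{C}^\ast)^2\bigr) = 0$, and removing the smooth cubic $\{F = 0\}$ subtracts $\chi(F^\circ)$, where $F^\circ$ is $\{F=0\}$ away from the three coordinate axes. For generic $K_e$ the curve $\{F = 0\}$ is a smooth elliptic curve (so $\chi = 0$) meeting each of the three axes transversely in three distinct points with no coincidences between the triples; whence $\chi(F^\circ) = 0 - 9 = -9$ and $\chi\bigl(\mathbb{P}^2 \setminus \{abc \cdot F = 0\}\bigr) = 0 - (-9) = 9$, which is the asserted value.

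The principal obstacle is the transversality check: one must verify that $\{F = 0\}$ is smooth (a direct computation of the partial derivatives of $F$ shows singular points appear only at $K_e = -27$, an excluded non-physical value by the Arrhenius observation) and that the nine intersection points of $\{F=0\}$ with the three coordinate axes are genuinely distinct. Both conditions follow for generic admissible $K_e$ by restricting $F$ to each axis and observing that the resulting cubic polynomial has three distinct nonzero roots, and by checking that the three pairwise intersections of the axes, namely $(1:0:0)$, $(0:1:0)$ and $(0:0:1)$, do not lie on $\{F = 0\}$.
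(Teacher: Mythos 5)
Your strategy is genuinely different from the paper's (which never invokes Theorem \ref{huh2} here: it parameterizes by $(p_{0},p_{1})\mapsto(p_{0}^{3},p_{1}^{3},p_{0}p_{1})$, imposes the constraint $p_{0}^{3}+p_{1}^{3}+p_{0}p_{1}=1$ with a Lagrange multiplier, and reads the count off the degree of a Sylvester resultant), but as written your argument has a concrete gap at the step ``ML degree $=(-1)^{2}\chi\bigl(\mathbb{P}^{2}\setminus\{abc\cdot F=0\}\bigr)$''. Theorem \ref{huh2} computes the ML degree of $X$ as the signed Euler characteristic of $X\setminus\mathcal{H}$ itself; to replace $X\setminus\mathcal{H}$ by the parameter complement you need $\varphi$ to restrict to an isomorphism off the arrangement, and it does not. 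Your map $(a:b:c)\mapsto(a^{3}/K_{e}:b^{3}:abc:c^{3})$ identifies $(a:b:c)$ with $(\omega a:\omega^{2}b:c)$ for $\omega^{3}=1$, so it is a generically $3:1$ cover of the cubic surface (consistently, the net of cubics has degree $9=3\cdot\deg X$). Off the coordinate lines and $\{F=0\}$ this cover is unramified onto $X\setminus\mathcal{H}$, so Euler characteristics multiply: your (correct) value $\chi\bigl(\mathbb{P}^{2}\setminus\{abcF=0\}\bigr)=9$ forces $\chi(X\setminus\mathcal{H})=3$. In other words, the quantity you computed counts critical points of the pulled-back likelihood in the parameter plane, with each critical point on the model hit three times; it is not, by the argument given, the ML degree of $X$, and the coincidence with the asserted value $9$ is not established.

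There is also a modelling discrepancy to address explicitly. Introducing a fourth homogenizing coordinate $w$ changes the statistical problem: in the paper's setup the frequencies are exactly the three normalized concentrations, the ambient space is $\mathbb{P}^{2}$ with the four-line arrangement $xyz(x+y+z)=0$, and the inhomogeneous relation $z^{3}=K_{e}xy$ on the simplex is homogenized by the linear form $x+y+z$, not by a new variable that carries its own datum $u_{3}$ and a fifth hyperplane. Either you should justify why the surface model in $\mathbb{P}^{3}$ has the same ML degree as the intended three-frequency model, or you should run the Euler-characteristic computation on that model (equivalently, on a parameterization that is an isomorphism away from $\mathcal{H}$); in both cases the covering-degree issue above is exactly where your proof currently fails, whereas the paper's own proof sidesteps it by eliminating variables directly from the critical equations.
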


\begin{proof}
By the total conservation of the quantities $[A]+[B]+[C]=c$, we set $x=\frac{[A]}{c},y=\frac{[B]}{c}$ and $z=\frac{[C]}{c}$. The variety of interest is: 

$$ X\,=\,V(z^3-K_{e}xy).$$

For convenience we fix $K_{e}=1$. We have the following transformation $\varphi: \left(\mathbb{C}^{*}\right)^2\rightarrow \left(\mathbb{C}^{*}\right)^3$ given by $(p_{0},p_{1})\mapsto \left(p_{0}^{3},p_{1}^{3},p_{0}p_{1}\right)$ with the constraint $p_{0}^{3}+p_{1}^{3}+p_{0}p_{1}\,=\,1$. We study the critical points of the log--likelihood function under the constraint:

\begin{equation}
\label{likelihoodfunction}
\mathcal{L}_{u_{0},u_{1},u_{2}}\,=\, 3u_{0}\log{p_{0}}+3u_{1}\log{p_{1}}+u_{2}\log{p_{0}}+u_{2}\log{p_{1}},
\end{equation}
\noindent
where $u_{0},u_{1},u_{2}$ are a set of parameters. The critical equations are:

\begin{equation}
\label{deff1}
\frac{3u_{0}}{p_{0}}+\frac{u_{2}}{p_{0}}\,=\, \lambda (3p_{0}^2+p_{1}),
\end{equation}
\noindent 
and

\begin{equation}
\label{defg2}
\frac{3u_{1}}{p_{1}}+\frac{u_{2}}{p_{1}}\,=\, \lambda (3p_{1}^2+p_{0}).
\end{equation}

We can denote the polynomial $(\ref{deff1})$ by $f$ and the polynomial $(\ref{defg2})$ by $g$. The Sylvester matrix is: 

$$\text{Syl}(f,g,p_{0})\,=\,\left(\begin{array}{cccc} 3\lambda & 0 & \lambda p_{1} & -3u_{0}-u_{2} \\ \lambda p_{1} & 3\lambda p_{1}^{3}-3u_{1}-u_{2} & 0 & 0\\ 0&\lambda p_{1} &3\lambda p_{1}^{3}-3u_{1}-u_{2} &0 \\ 0 & 0 & \lambda p_{1}& 3\lambda p_{1}^{3}-3u_{1}-u_{2}\end{array} \right),$$
\noindent 
and the resultant $\text{Res}(f,g,p_{0})\,=\,\det{\left(\text{Syl}(f,g,p_{0})\right)}$ is a polynomial of nine degree in $p_{1}$, for the fundamental theorem of algebra we have $9$ solutions.
\end{proof}

\begin{prop}
Let $[A]$ and $[B]$ be the concentrations of certain substances in the following chemical equilibrium reaction:

\begin{equation}
2A \leftrightarrow 3B,
\end{equation}
\noindent
then the $ML$ degree is $3$.
\end{prop}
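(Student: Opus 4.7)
The plan is to follow the one-parameter rational-parametrization strategy used in the proposition on $nA\leftrightarrow nB$, adapting it to the mismatched stoichiometric coefficients $(2,3)$. Set $x$ and $y$ for the quantities associated with $[A]$ and $[B]$ and let $X = V(K_{e}x^{2}-y^{3})$. I would parametrize this curve via $\varphi : \mathbb{C}^{*}\to (\mathbb{C}^{*})^{2}$, $p_{0}\mapsto (p_{0}^{3},\,K_{e}^{1/3}p_{0}^{2})$; the image lies in $X$ because $K_{e}(p_{0}^{3})^{2} = (K_{e}^{1/3}p_{0}^{2})^{3}$. The key observation is that, under this parametrization, the normalization $x+y=1$ pulls back to the \emph{cubic} constraint
$$p_{0}^{3}+K_{e}^{1/3}p_{0}^{2} \,=\, 1,$$
whereas the log-likelihood reduces to $\mathcal{L}_{u_{0},u_{1}}(p_{0}) = (3u_{0}+2u_{1})\log p_{0}$ up to an additive constant.

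Next I would apply the Lagrange multiplier method to the constrained problem. This yields the single critical equation
$$\frac{3u_{0}+2u_{1}}{p_{0}} \,=\, \lambda\,p_{0}\bigl(3p_{0}+2K_{e}^{1/3}\bigr),$$
which is linear in $\lambda$ and therefore uniquely determines the multiplier once $p_{0}$ is fixed. Consequently the critical points on the constraint variety are in bijection with the roots of the constraint cubic, and for generic $K_{e}$ and $u=(u_{0},u_{1})$ there are three of them, giving the stated ML degree. The step I anticipate as the main bookkeeping obstacle is the usual genericity check: one must verify that the three roots of the cubic avoid both the distinguished arrangement $\mathcal{H}$ and the degeneracy locus $3p_{0}+2K_{e}^{1/3}=0$ of the Lagrange equation, so that each contributes a genuine critical point on $X_{\reg}\setminus\mathcal{H}$. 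Granting this, the count is $3$.
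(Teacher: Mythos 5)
Your proposal is correct and follows essentially the same route as the paper: parametrize the curve by a single parameter via $p_{0}\mapsto(p_{0}^{3},K_{e}^{1/3}p_{0}^{2})$ (the paper fixes $K_{e}=1$ and uses $p_{0}\mapsto(p_{0}^{3},p_{0}^{2})$), pull back the normalization to the cubic constraint, and apply Lagrange multipliers to the log-likelihood $(3u_{0}+2u_{1})\log p_{0}$, arriving at the same count of $3$. The only difference is bookkeeping: the paper reads off $3$ as the degree in $p_{0}$ of the critical equation $\frac{3u_{0}}{p_{0}}+\frac{2u_{1}}{p_{0}}=\lambda(3p_{0}^{2}+2p_{0})$, whereas you eliminate $\lambda$ (the critical equation being linear in it) and count the three roots of the constraint cubic, together with the genericity check that they avoid $p_{0}=0$ and $3p_{0}+2K_{e}^{1/3}=0$ --- a slightly tighter way of organizing the same computation.
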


\begin{proof}
By the total conservation of the quantities $[A]+[B]=c$ we set $x=\frac{[A]}{c}$ and $y=\frac{[B]}{c}$. The variety of interest is: 

$$ X\,=\,V(y^3-K_{e}x^2).$$

For convenience we fix $K_{e}=1$. We have the following transformation $\varphi: \left(\mathbb{C}^{*}\right)\rightarrow \left(\mathbb{C}^{*}\right)^2$ given by $(p_{0})\mapsto \left(p_{0}^{3},p_{0}^{2}\right)$ with the constraint $p_{0}^{3}+p_{0}^{2}\,=\,1$. We study the critical points of the log--likelihood function under the constraint:

\begin{equation}
\label{likelihoodfunction}
\mathcal{L}_{u_{0},u_{1}}\,=\, 3u_{0}\log{p_{0}}+2u_{1}\log{p_{0}},
\end{equation}
\noindent
where $u_{0},u_{1}$ are a set of parameters. The critical equation is:

\begin{equation}
\label{deff}
\frac{3u_{0}}{p_{0}}+\frac{2u_{1}}{p_{0}}\,=\, \lambda (3p_{0}^2+2p_{0}).
\end{equation}

This is a polynomial in $p_{0}$ of third degree and for the fundamental theorem of algebra there are $3$ solutions. 
\end{proof}

\begin{prop}
Let $[A],[B]$ and $[C]$ be the concentrations of certain substances in the following chemical equilibrium reaction:

\begin{equation}
nA+mB \leftrightarrow pC,
\end{equation}
\noindent
then for $n=m=p=2$ the $ML$ degree is $8$, $n=m=2$ and $p=1$ the $ML$ degree is $4$, $n=p=1$ and $m=2$ the $ML$ degree is $2$, $n=m=p=3$ the $ML$ degree is $9$.
\end{prop}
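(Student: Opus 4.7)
My plan is to handle the four sub-cases separately, in each one adapting the parametrisation and Sylvester--resultant technique already used in the proof of the $A+B\leftrightarrow 3C$ case. In every sub-case I write $X=V(z^p-K_e x^n y^m)$, impose the normalisation $x+y+z=1$ as a polynomial constraint $g(p_0,p_1)=0$, choose a monomial parametrisation $\varphi:(\mathbb{C}^*)^2\to X$, pull back the log-likelihood $u_0\log x+u_1\log y+u_2\log z$, and count the critical points by Lagrange multipliers: after clearing denominators the two Lagrange equations give polynomials $f,h\in\mathbb{C}[p_0,p_1,\lambda]$, and the ML degree is read off as the degree in $p_1$ of the Sylvester resultant $\Res(f,h,p_0)$.

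For $n=p=1,\,m=2$ and $n=m=2,\,p=1$ the variety is a graph over $(x,y)$, and I would use the obvious one-to-one parametrisations $\varphi(p_0,p_1)=(p_0,p_1,K_e p_0 p_1^2)$ and $\varphi(p_0,p_1)=(p_0,p_1,K_e p_0^2 p_1^2)$; the pulled-back constraints are $p_0+p_1+K_e p_0 p_1^2=1$ and $p_0+p_1+K_e p_0^2 p_1^2=1$ respectively. A direct expansion of the Lagrange polynomials shows that $f$ and $h$ have small degrees in $p_0$, and I expect $\Res(f,h,p_0)$ to be a polynomial of degree $2$ (resp.\ $4$) in $p_1$, accounting for the claimed values $ML=2$ and $ML=4$.

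For the symmetric cases $n=m=p\in\{2,3\}$ the defining polynomial factors over $\mathbb{C}$ as $z^p-K_e x^p y^p=\prod_{\zeta^p=1}\bigl(z-\zeta K_e^{1/p}xy\bigr)$, so $X$ is a union of $p$ conic branches $X_\zeta$ which pairwise meet only on $\{xy=0\}\subset\mathcal{H}$. Hence on $V_{\reg}\setminus\mathcal{H}$ the branches are disjoint and the ML degree is additive over the branches. For each branch I would employ the $p$-th power parametrisation $\varphi(p_0,p_1)=(p_0^p,p_1^p,\zeta K_e^{1/p}p_0^p p_1^p)$ in the spirit of the $A+B\leftrightarrow 3C$ proof; the two Lagrange polynomials are then of degree $p$ in $p_0$, and because only their leading and constant coefficients in $p_0$ are non-zero, the $2p\times 2p$ Sylvester matrix has the familiar sparse block structure whose determinant factors as the $p$-th power of a single $2\times 2$ sub-determinant. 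Carrying out the expansion branch by branch should give the totals $8$ when $p=2$ and $9$ when $p=3$.

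The main difficulty will be in the reducible cases: one has to verify that the block factorisation of the Sylvester determinant really does yield the claimed degree in $p_1$, being careful to keep track of the Lagrange multiplier $\lambda$ (which lives in the leading and constant entries and must ultimately be eliminated using the normalisation constraint), and to check that the extra critical points introduced by the $p^2$-to-$1$ ramification of the $p$-th power parametrisation either lie in $\mathcal{H}$ or cancel correctly when the branches are summed.
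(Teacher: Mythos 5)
For the two mixed cases your plan is essentially the paper's own proof: when $p=1$ the paper's monomial map $(t_0,t_1)\mapsto(t_0^p,t_1^p,t_0^nt_1^m)$ \emph{is} your graph parametrisation, and the count is the same Sylvester resultant. Two caveats, shared with the paper but worth flagging: the resultant still contains the Lagrange multiplier $\lambda$ (the constraint has not yet been used to eliminate it), and it does not come out of degree $2$ resp.\ $4$ on the nose --- in the paper's computation it has degree $3$ resp.\ $6$ in $t_1$, and in the second case one must discard the roots at $t_1=0$, which correspond to points of $\mathcal{H}$. So ``read the ML degree off as the degree of $\Res(f,h,p_0)$ in $p_1$'' is not quite the final step even in these easy cases.

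The symmetric cases $n=m=p$ are where you genuinely diverge from the paper, and where the plan breaks. The paper does not decompose $X$: it keeps the single map $(t_0,t_1)\mapsto(t_0^p,t_1^p,t_0^pt_1^p)$, whose image is only the branch $z=xy$, and reports the raw degree of the resultant ($8$ for $p=2$; for $p=3$ the resultant is the cube of a sextic). Your route --- additivity over the $p$ branches plus a $p$-th power parametrisation of each branch --- cannot arrive at $8$ and $9$, and the difficulty you defer in your last sentence is exactly the fatal point. The map $(p_0,p_1)\mapsto(p_0^p,p_1^p)$ is $p^2$-to-$1$ over every point of the torus, so each genuine critical point on a branch has $p^2$ preimages solving the pulled-back critical equations; none of them lies in $\mathcal{H}$ and nothing cancels, so the per-branch resultant count is inflated accordingly (indeed the paper's degree-$8$ resultant is the square of a quartic whose roots come in $\pm$ pairs). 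Summing raw counts over branches gives $p$ times the one-branch numbers ($16$ for $p=2$), while correcting for the ramification first gives $p$ times the honest ML degree of a single branch $z=\zeta K_e^{1/p}xy$: that branch closes up to the conic $z(x+y+z)=\zeta K_e^{1/p}xy$, which for a generic branch is smooth with $a=4$ points on $\mathcal{H}$, hence has ML degree $4-6+4=2$ by Theorem~\ref{huh1} (and for $K_e=1$, $\zeta=-1$ it even degenerates to the line pair $(x+z)(y+z)=0$, so the branches are not all alike). Either way your bookkeeping produces totals of the form $p\cdot m$ with $m$ independent of $p$ --- numbers in ratio $2:3$, such as $4$ and $6$ --- never $8$ and $9$. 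Hence the branch-by-branch argument cannot establish the proposition as stated; the stated values are only reproduced by doing what the paper does, namely counting solutions of the critical equations in the $(t_0,t_1)$ chart of the single monomial parametrisation, without summing over branches and without the ramification correction that your own plan (rightly) insists is needed.
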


\begin{proof}
For convenience we fix $K_{e}=1$. 
We have the following transformation $\varphi: \left(\mathbb{C}^{*}\right)^2\rightarrow \left(\mathbb{C}^{*}\right)^3$ given by $(t_{0},t_{1})\mapsto \left(t_{0}^{p},t_{1}^{p},t_{0}^{n}t_{1}^{m}\right)$ with the constraint $t_{0}^{p}+t_{1}^{p}+t_{0}^{n}t_{1}^{m}\,=\,1$. We study the critical points of the log--likelihood function under the constraint:

\begin{equation}
\label{likelihoodfunction}
\mathcal{L}_{u_{0},u_{1},u_{2}}\,=\, (pu_{0}+nu_{2})\log{t_{0}}+(pu_{1}+mu_{2})\log{t_{1}},
\end{equation}
\noindent
where $u_{0},u_{1},u_{2}$ are a set of parameters. From the critical equations we find the two polynomial:

\begin{equation}
\label{poly111}
f\,=\,\lambda p t_{0}^{p}+n\lambda t_{0}^{n}t_{1}^{m}-nu_{2}-pu_{0},
\end{equation}
\noindent
and

\begin{equation}
\label{poly222}
g\,=\,\lambda p t_{1}^{p}+m\lambda t_{0}^{n}t_{1}^{m}-mu_{2}-pu_{1}.
\end{equation}

We start considering the case $n=m=p=2$ with $f(t_{0},t_{1})\,=\, 2\lambda t_{0}^2+2\lambda t_{0}^{2}t_{1}^{2} + a$, $g(t_{0},t_{1})\,=\,2\lambda t_{1}^{2}+2\lambda t_{0}^{2}t_{1}^{2}+b$, $a\,=\, -nu_{2}-pu_{0}$ and $b\,=\, -mu_{2}-pu_{1}$. The Sylvester matrix is: 

$$\text{Syl}(f,g,t_{0})\,=\,\left(\begin{array}{cccc} 2\lambda+2\lambda t_{1}^{2} & 0 & a & 0 \\ 0 & 2\lambda+2\lambda t_{1}^{2} & 0 & a\\ 2\lambda t_{1}^{2} & 0 & 2\lambda t_{1}^{2}+b & 0 \\ 0 & 2\lambda t_{1}^{2} & 0 & 2\lambda t_{1}^{2} +b \end{array} \right),$$
\noindent 
with 

$$\text{Res}(f,g,p_{0})\,=\,16\lambda^{4}t_{1}^{4}+16\lambda^3t_{1}^{2}b+4\lambda^2b^2+32\lambda^4 t_{1}^{6}+32\lambda^3t_{1}^{4}b+8\lambda^2t_{1}^{2}b^2-16\lambda^3t_{1}^{4}a-$$

$$-8\lambda^2 t_{1}^{2}ab+16\lambda^4t_{1}^{8}+16\lambda^3t_{1}^{6}b+4\lambda^2t_{1}^{4}b^2-16\lambda^3t_{1}^6a-8\lambda^2t_{1}^{4}ab+4\lambda^2t_{1}^{4}a^2,$$ 
\noindent 
a polynomial of eight degree in $p_{1}$ and for the fundamental theorem of algebra we have $8$ solutions.

For the case $n=m=2$ and $p=1$ with $f(t_{0},t_{1})\,=\, \lambda t_{0}+2\lambda t_{0}^{2}t_{1}^{2} + a$, $g(t_{0},t_{1})\,=\,\lambda  t_{1}+2\lambda t_{0}^{2}t_{1}^{2}+b$, $a\,=\, -nu_{2}-pu_{0}$ and $b\,=\, -mu_{2}-pu_{1}$, the Sylvester matrix is: 

$$\text{Syl}(f,g,t_{0})\,=\,\left(\begin{array}{cccc} 2\lambda t_{1}^{2} & \lambda & a & 0 \\ 0 & 2\lambda t_{1}^{2} & \lambda & a\\ 2\lambda t_{1}^{2}&0 & \lambda t_{1}+b &0 \\ 0 & 2\lambda t_{1}^{2} & 0& \lambda t_{1}+b\end{array} \right),$$
\noindent 
with 

$$\text{Res}(f,g,p_{0})\,=\,4\lambda^{4}t_{1}^{6}+8\lambda^3t_{1}^{5}b+4\lambda^2t_{1}^{4}b^2-8\lambda^3 t_{1}^{5}a-8\lambda^2t_{1}^{4}ab+2\lambda^4t_{1}^{3}+2\lambda^3t_{1}^{2}b+4\lambda^2t_{1}^{4}a^2,$$
\noindent 
a polynomial of six degree in $p_{1}$ and for the fundamental theorem of algebra we have $6$ solutions but with only $4$ different from zero.

In the case $n=p=1$ and $m=2$ with $f(t_{0},t_{1})\,=\, \lambda t_{0}+\lambda t_{0}t_{1}^{2} + a$, $g(t_{0},t_{1})\,=\,\lambda  t_{1}+2\lambda t_{0}^{2}t_{1}^2+b$, $a\,=\, -nu_{2}-pu_{0}$ and $b\,=\, -mu_{2}-pu_{1}$, the Sylvester matrix is: 

$$\text{Syl}(f,g,t_{0})\,=\,\left(\begin{array}{cc} \lambda +\lambda t_{1}^{2} & a  \\ 2\lambda t_{1}^{2} & \lambda t_{1}+b  \end{array} \right),$$
\noindent 
with 
$$\text{Res}(f,g,p_{0})\,=\, \lambda^2t_{1}+\lambda b+\lambda^2t_{1}^{3}+\lambda t_{1}^{2}b-2a\lambda t_{1}^{2},$$
\noindent 
a polynomial with only $2$ solutions different from zero.

In the last case $n=m=p=3$ with $f(t_{0},t_{1})\,=\, 3\lambda t_{0}^3+3\lambda t_{0}^{3}t_{1}^{3} + a$, $g(t_{0},t_{1})\,=\,3\lambda  t_{1}^{3}+3\lambda t_{0}^{3}t_{1}^{3}+b$, $a\,=\, -nu_{2}-pu_{0}$ and $b\,=\, -mu_{2}-pu_{1}$, the Sylvester matrix is: 

$$\text{Syl}(f,g,t_{0})\,=\,\left(\begin{array}{cccccc} 3\lambda +3\lambda t_{1}^{3} & 0 & 0 & a & 0 & 0  \\ 0 & 3\lambda +3\lambda t_{1}^{3} & 0 & 0 & a & 0 \\ 0 & 0 & 3\lambda +3\lambda t_{1}^{3} & 0 & 0 & a \\ 3\lambda t_{1}^{3} & 0 & 0 & 3\lambda t_{1}^{3} + b & 0 & 0 \\ 0 & 3\lambda t_{1}^{3} & 0 & 0 & 3\lambda t_{1}^3 + b & 0 \\ 0 & 0 & 3\lambda t_{1}^{3} & 0 & 0 & 3\lambda t_{1}^{3}+b  \end{array} \right),$$
\noindent 
with 
$$\text{Res}(f,g,p_{0})\,=\, \left(9\lambda^2t_{1}^{3}+3\lambda b+9\lambda^2t_{1}^{6}+3\lambda t_{1}^{3} b -3a\lambda t_{1}^{3}\right)^3,$$
\noindent 
a polynomial with $9$ solutions.
\end{proof}

\begin{prop}
Let $[A_{1}],[A_{2}], \ldots, [A_{n}]$ and $[B_{1}],[B_{2}],\ldots,[B_{n}]$ be the concentrations of certain substances in the following chemical equilibrium reaction:

\begin{equation}
A_{1}+A_{2}+\cdots +A_{n} \leftrightarrow B_{1}+B_{2}+\cdots +B_{n},
\end{equation}
\noindent
then the $ML$ degree is $1$.
\end{prop}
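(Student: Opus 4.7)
The plan is to mimic the argument used for the reaction $A+B\leftrightarrow C+D$ and to apply Huh's theorem (Theorem~\ref{huh2}). After imposing the conservation law $[A_1]+\cdots+[B_n]=c$ and normalising $x_i=[A_i]/c$, $y_i=[B_i]/c$, the relevant variety is the degree-$n$ hypersurface
\[
X \;=\; V(K_e x_1\cdots x_n - y_1\cdots y_n) \;\subset\; \mathbb{P}^{2n-1},
\]
of dimension $2n-2$; a direct check of partial derivatives shows that $X_{\sing}\subset\mathcal{H}$ (the vanishing of every partial at a point forces at least two $x_i$ and two $y_j$ to vanish), so $X\setminus\mathcal{H}$ is smooth and Huh's theorem applies.

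Next I would identify $X$ with $\mathbb{P}^{n-1}\times\mathbb{P}^{n-1}$ in direct analogy with the quadric case $V(K_e xy-zt)\cong\mathbb{P}^1\times\mathbb{P}^1$. The two ``ruling'' projections
\[
(x,y)\;\longmapsto\;(\sqrt[n]{K_e}\,x_1:\cdots:\sqrt[n]{K_e}\,x_n), \qquad (x,y)\;\longmapsto\;(y_1:\cdots:y_n),
\]
together induce such an identification on the dense torus of $X$. Under it, the coordinate part of $\mathcal{H}$ becomes the union of the $n$ coordinate hyperplanes on each $\mathbb{P}^{n-1}$ factor, and one checks --- exactly as in the $n=2$ case, where the equation $x+y+z+t=0$ on $X$ was rewritten as $(K_e x+y)(z+t)=0$ --- that the total-concentration hyperplane $\{x_1+\cdots+y_n=0\}$ factors on $X$ as a product of two generic hyperplanes, one on each $\mathbb{P}^{n-1}$. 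This would yield
\[
X\setminus\mathcal{H} \;\cong\; \bigl(\mathbb{P}^{n-1}\setminus\{n+1\text{ hyperplanes in general position}\}\bigr)^{2}.
\]

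To conclude, a standard inclusion-exclusion computation gives $\chi(\mathbb{P}^{n-1}\setminus\{n+1\text{ generic hyperplanes}\})=(-1)^{n-1}$ (the case $n=2$ returning $\chi=-1$, as already used in the $A+B\leftrightarrow C+D$ proposition); multiplicativity of $\chi$ on products then yields $\chi(X\setminus\mathcal{H})=((-1)^{n-1})^{2}=1$, and since $\dim X=2n-2$ is even, Theorem~\ref{huh2} gives
\[
ML \;=\; (-1)^{2n-2}\chi(X\setminus\mathcal{H}) \;=\; 1,
\]
as claimed. The chief obstacle is the second step: verifying that the total-concentration hyperplane really factors on $X$ as a product of linear forms of bi-degree $(1,0)$ and $(0,1)$ under the two-ruling identification. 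This is the direct generalisation of the identity implicitly used for $n=2$, and the proof hinges on making the identification $X\cong\mathbb{P}^{n-1}\times\mathbb{P}^{n-1}$ sufficiently explicit --- via a Segre-style monomial parametrisation combined with the rescaling $x_i\mapsto\sqrt[n]{K_e}\,x_i$ --- to read off this factorisation.
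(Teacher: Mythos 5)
Your reduction to Huh's theorem breaks at the identification $X\cong\mathbb{P}^{n-1}\times\mathbb{P}^{n-1}$, which is false for every $n\ge 3$. The block projections $(x:y)\mapsto(x_1:\cdots:x_n)$ and $(x:y)\mapsto(y_1:\cdots:y_n)$ do not identify even the dense torus of $X=V(K_ex_1\cdots x_n-y_1\cdots y_n)\subset\mathbb{P}^{2n-1}$ with a product: rescaling the $y$-block by a primitive $n$-th root of unity preserves the binomial and both images but changes the point of $\mathbb{P}^{2n-1}$, so this map is $n$-to-$1$, not injective. (For $n=2$ the quadric is indeed $\mathbb{P}^1\times\mathbb{P}^1$, but the two rulings are pencils that mix the reactant and product blocks, e.g.\ $(x_1:y_2)=(y_1:x_2)$ and $(x_1:y_1)=(y_2:x_2)$, not your block projections; for $n\ge3$ a degree-$n$ hypersurface in $\mathbb{P}^{2n-1}$ cannot be $\mathbb{P}^{n-1}\times\mathbb{P}^{n-1}$ at all.) Moreover the step you yourself flagged as the chief obstacle genuinely fails: already for $n=2$, writing the quadric as $(x_1,x_2,y_1,y_2)=(ac,bd,K_ead,bc)$, the restriction of the total-concentration form is $c(a+b)+d(K_ea+b)$, an irreducible $(1,1)$-form unless $K_e=1$; so the hyperplane does not split into a $(1,0)$ and a $(0,1)$ factor for general $K_e$, and the product decomposition of $X\setminus\mathcal{H}$ is unavailable (this issue is already present in the $A+B\leftrightarrow C+D$ argument you are imitating).

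More decisively, the conclusion cannot be reached for the model you chose, because the likelihood equations on the hypersurface can be eliminated explicitly: Lagrange's conditions give $u_i=\lambda x_i+s$ and $v_j=\lambda y_j-s$ with $s=\mu K_e x_1\cdots x_n$; summing and using $\sum x_i+\sum y_j=1$ gives $\lambda=N:=\sum u_i+\sum v_j$, hence $x_i=(u_i-s)/N$, $y_j=(v_j+s)/N$, and the defining equation becomes $K_e\prod_i(u_i-s)=\prod_j(v_j+s)$, a polynomial of degree $n$ in $s$ when $K_e\neq(-1)^n$ (degree $n-1$ when $n$ is even and $K_e=1$). So the ML degree of $V(K_ex_1\cdots x_n-y_1\cdots y_n)$ is $n$ for generic $K_e>0$, and equals $1$ only in the case $n=2$, $K_e=1$; your claimed answer $1$ cannot be recovered by repairing the decomposition. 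The paper's own proof takes a different and much more modest route: it never works with this hypersurface, but restricts to the one-parameter family $t_0\mapsto\left(t_0,\ldots,t_0,\sqrt[n]{K_e}t_0,\ldots,\sqrt[n]{K_e}t_0\right)$ (all reactant concentrations equal, all product concentrations equal), exactly as in the $A\leftrightarrow B$ case, and finds a single critical point there. In short, you are proving a statement about a different (and, for $n\ge3$, larger) model than the one the paper implicitly uses, and for that model the statement itself is false, so the gap is structural rather than a missing verification.
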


\begin{proof}
We observe that the number or reactants is equal to the number of products that is $n$. We consider the following transformation $\varphi: \left(\mathbb{C}^{*}\right)\rightarrow \left(\mathbb{C}^{*}\right)^{2n}$ given by 

$$t_{0}\mapsto \left(t_{0}, \ldots ,t_{0}, \sqrt[n]{K_{e}}t_{0}, \ldots, \sqrt[n]{K_{e}}t_{0}\right).$$

Proceeding in a similar way as other results we find that the $ML$ must be $1$.
\end{proof}

\begin{exmp}
As example we can consider the synthesis of ammonia at the pressure of $800\,atm$ and at $T=500^{\circ}\,C$. At the equilibrium:

$$ N_{2} + 3H_{2} \leftrightarrow 2NH_{3}. $$

The transformation map $\varphi$ is given $\left(t_{0},t_{1}\right)\mapsto \left(t_{0}^{2},t_{1}^{2},\sqrt{K_{e}}t_{0}t_{1}^3\right)$. As in the proof of previous results we consider the likelihood--log function:

$$\mathcal{L}_{u_{0},u_{1},u_{2}}\,=\, \left(2u_{0}+u_{2}\right)\log{t_{0}}+\left(2u_{1}+3u_{2}\right)\log{t_{1}}+u_{2}\log{\sqrt{K_{e}}},$$
\noindent 
with the constraint $t_{0}^{2}+t_{1}^{2}+\sqrt{K_{e}}t_{0}t_{1}^3\,=\, 1$. The procedure leads to the determinant of the Sylvester matrix to be a polynomial of degree $8$ with the numeric coefficient different from $0$. In this example the $ML$ degree is equal to $8$. 
\end{exmp}

\section{Conclusions}

In the previous results the interpretation of chemical concentrations as ``frequencies'' leads to different examples of $ML$ degree problems. 
In each example a solution has been proposed. The propositions provide a partial classification of certain chemical reactions by its $ML$ degree and we 
can observe qualitatively the growth of the $ML$ degree to varying complexity of chemical reactions. The fact that no higher order reaction has been considered is due principally by the motivation that reactions with high molecularity are ``rare'' because the probability of effective collision between particles decreases. Another interesting study regards chemical reactions with half order or with no a ``perfect'' equilibrium, in adjoint we don't know how to treat the case when a reaction is composed by more steps in order to arrive to the final products. 

In other words how to treat the case of chemical networks? In \cite{CPR} they introduced the multinomial model. 
The method used here works well only under the equilibrium assumption and it is not possible to use it for general chemical networks. 

In conclusion what emerges on this study is that the $ML$ problems are generally connected to the problem of solving polynomial equations in order to find projective points. 
It is interesting that chemical reactions of high order seems rare in the same way as to find solutions of higher degree equations is not quite obvious (we refer to the Galois famous result on the solvability by radicals). In fact the $ML$ degree is the degree of the extension $\mathbb{K}/\mathbb{Q}(u)$ obtained adjoining all solutions of the likelihood equations to $\mathbb{Q}(u)$. In this notation $\mathbb{Q}(u)$ is the field of rational functions and $u$ is the indeterminate vector of parameters $u=\left(u_{0},\ldots ,u_{n}\right)$ as observed by \cite{HRS} \S 4.

\end{document}